\newtheorem{theorem}{Theorem}[section]
\theoremstyle{definition}
\theoremstyle{definitions}
\newtheorem{definition}[theorem]{Definition}
\newtheorem{remark}[theorem]{Remark}
\theoremstyle{notations}
\theoremstyle{note}
\theoremstyle{remarks}
\newcommand{\T}{\mathrm}
\newcommand{\F}{\mathbb{F}}
\newcommand{\fa}{\frak{a}}
\newcommand{\fb}{\frak{b}}
\newcommand{\fp}{\frak{p}}
\begin{document}
\author[Z. Sepasdar]
{Zahra Sepasdar}

\title[quasi cyclic codes]
{Generator polynomials and generator matrix for quasi cyclic codes}
\subjclass[2010]{12E20, 94B05, 94B15, 94B60} \keywords{quasi cyclic code, generator matrix, generator polynomial}
\thanks{E-mail addresses:
zahra.sepasdar@mail.um.ac.ir and zahra.sepasdar@gmail.com}
\maketitle

\begin{center}
{\it
 Department of Pure Mathematics, Ferdowsi University of Mashhad,\\
P.O.Box 1159-91775, Mashhad, Iran} \\
\end{center}
\vspace{0.4cm}
\begin{abstract}
Quasi-cyclic (QC) codes form an important generalization of cyclic codes. It is well know that QC codes of length $s\ell$ with index $s$ over the finite field $\F$ are $\F[y]$-submodules of the ring $\F[x,y]/{<x^s -1,y^{\ell}-1>}.$ 
The aim of the present paper, is to study QC codes of length $s\ell$ with index $s$ over the finite field $\F$ and  find generator polynomials and generator matrix for these codes. To achieve this aim, we apply a novel method to find generator polynomials for $\F[y]$-submodules of  $\F[x,y]/{<x^s -1,y^{\ell}-1>}$. These polynomials will be applied to obtain generator matrix for corresponding QC codes. 
\vspace{0.5cm}
\end{abstract}

\section{Introduction}
Cyclic codes form an important family of codes. Cyclic codes of length $n$ over the finite field $\F$ are ideals of the polynomial ring $\F[x]/<x^n-1>$. It is well known that $\pi: \F^n\longrightarrow \F[x]/<x^n-1>$ which takes an element $a=(a_0,a_1,\dots,a_{n-1}) \in \F^n$ to the polynomial $a(x)=a_0+a_1x+\dots+a_{n-1}x^{n-1}$, is a one to one correspondence between cyclic codes over $\F$ and ideals of $\F[x]/<x^n-1>$.
Obviously, $\pi$ can be generalized to $$\mu: \F^{s\ell} \longrightarrow \F[x,y]/<x^s-1,y^{\ell}-1>$$ which  takes an element $$a=(a_{0,0},a_{0,1},\dots,a_{0,s-1},a_{1,0},a_{1,1},\dots,a_{1,s-1},\dots, a_{\ell-1,0},a_{\ell-1,1},\dots,a_{\ell-1,s-1}) \in \F^{s\ell}$$ to a bivariate polynomial of degree at most $s-1$ with respect to $x$ and $\ell-1$ with respect to $y$ 
 \begin{align*}
a(x,y)&=(a_{0,0}+a_{0,1}x+\dots+a_{0,s-1}x^{s-1})y^0+(a_{1,0}+a_{1,1}x+\dots+a_{1,s-1}x^{s-1})y \\&+\dots+
(a_{\ell-1,0}+a_{\ell-1,1}x+\dots +a_{\ell-1,s-1}x^{s-1})y^{\ell-1}.
 \end{align*}
In fact, $\mu$ is a correspondence between one of the generalization of cyclic codes, two-dimensional cyclic (TDC) codes of length $n=\ell s$ and ideals of the polynomial ring $\F[x,y]/{<x^s -1,y^\ell-1>}.$ 
The algebraic structure of some of these codes  and their dual codes were studied by the present author in \cite{zahra} and their generator polynomials are obtained. Another important generalization of cyclic code is quasi-cyclic code. 
\begin{definition}
A linear code $C$ of length $n=s\ell$ over the finite field $\F$ is called a quasi-cyclic code (QC code) of index $s$ if for every codeword $c\in C$ the codeword obtained by $s$-cyclic shifts is also a codeword in $C$. That is,
$$  (c_0, c_1, \dots, c_{n-1}) \in C  \Longrightarrow   (c_{n-s},\dots, c_0, \dots ,c_{n-s-1}) \in C.$$
\end{definition}
According to the definition of $\mu$, QC codes of length $s\ell$ with index $s$ over the finite field $\F$ are $\F[y]$-submodules of the ring $\F[x,y]/{<x^s -1,y^{\ell}-1>}.$
QC codes form an interesting family of codes, since many codes with best minimum distance are QC codes. Many coding theorists have studied the structure of these codes, see \cite{G1, G2}. And some authors tried to characterize these codes. In \cite{lally}, Lally et al. used the language of Gr\"obner bases to characterize the structure of QC codes. Also in \cite{G4, G5} Ling et al. investigated the algebraic structures of QC codes. 

One of the main concerns about QC codes is finding the related generator
polynomials, because this enables us to investigate the structure of QC codes. This procedure most probably helps to decode QC codes as it has been done for cyclic codes. The aim of the present paper, is to study QC codes of length $s\ell$ with index $s$ over the finite field $\F$ and find generator polynomials and generator matrix for them. Since QC codes of length $s\ell$ with index $s$ are $\F[y]$-submodules of the ring $\F[x,y]/{<x^s -1,y^{\ell}-1>},$ we study the algebraic structure of  $\F[y]$-submodules of the ring $\F[x,y]/{<x^s -1,y^{\ell}-1>}$. To achieve this purpose, we apply a novel method to find generator polynomials for $\F[y]$-submodules of $\F[x,y]/{<x^s -1,y^{\ell}-1>}$ and then use these polynomials to obtain a generator matrix for corresponding QC codes. It is worth to note that similar method can be applied to obtain generator matrix for the other families of codes, such as generalized quasi-cyclic codes and multi-twisted codes.

\begin{remark}
For simplicity of notation, we write $g(x)$ instead of $g(x)+<\fa>$ for elements of $\F[x]/<\fa>$. Similarly, we write $g(x,y)$ instead of $g(x,y)+<\fa,\fb>$ for elements of $\F[x,y]/<\fa,\fb>$.
\end{remark}

\section{Generator polynomials}
Set $R:=\F[x,y]/<x^s-1, y^{\ell}-1>$ and $S:=\F[y]/<y^{\ell}-1>$. Suppose that $M$ is an $\F[y]$-submodule of $R$. In this section, we construct ideals $I_i$  of $S$ ($i=0,\dots, s-1$) and prove that generator polynomials of these ideals provide a generating set for $M$. Assume that $f(x,y)$ is an arbitrary element  of $M$. Since
$$\F[x,y]/<x^s-1,y^{\ell}-1>\cong (\F[y]/<y^{\ell}-1>)[x]/<x^s-1>,$$
$f(x,y)$ can be written uniquely as $f(x,y)=\sum_{i=0}^{s-1}f_i(y)x^i$, where $f_i(y) \in S$ for $i=0,\dots, s-1$.
Put
\begin{align*}
I_0=\{g_0(y) \in S: \T{there \  exists \ } & g(x,y)\in M \ \T{such \  that}\ g(x,y)=\sum_{i=0}^{s-1}g_i(y)x^i\}.
\end{align*}
First, we prove that $I_0$ is an ideal of the ring $S$. Assume that $g_0(y)$ is an arbitrary element of $I_0$. According to the definition of $I_0$, there exists $g(x,y)\in M$ such that $g(x,y)=\sum_{i=0}^{s-1}g_i(y)x^i$. Now,  $yg_0(y)\in I_0$ since $M$ is an $\F[y]$-submodule of $R$ and $yg(x,y)=\sum_{i=0}^{s-1}yg_i(y)x^i$ is an element of $M$. Also it is clear that $I_0$ is closed under addition, so $I_0$ is an ideal of $S$.
It is well known that $S$ is a principal ideal ring. Thus, there exists a unique monic polynomial $p_0^{0}(y)$ in $S$ such that $I_0=<p_0^{0}(y)>$ and $p_0^{0}(y)$ is a divisor of $y^{\ell}-1$. So there exists a polynomial $p'_0(y)$ in $\F[y]$ such that $y^{\ell}-1=p'_0(y)p_0^{0}(y)$. Since $f(x, y) \in M$ and according to the definition of $I_0$, $f_0(y) \in I_0=<p_0^{0}(y)>$. So 
\begin{align}\label{1}
f_0(y)=p_0^{0}(y) q_0(y)
\end{align}
for some $q_0 (y) \in S$. 
Now,  $p_0^{0}(y) \in I_0$  so according to the definition of $I_0$, there exists $\fp_0(x,y) \in M$ such that $$\fp_0 (x,y)=\sum_{i=0}^{s-1} p_i^{0}(y)x^i.$$ Set 
\begin{align*}
h_1(x,y):&=f(x,y)-\fp_0(x,y)q_0(y)=\sum_{i=0}^{s-1}f_i(y)x^i-q_0(y)\sum_{i=0}^{s-1} p_i^{0}(y)x^i\\&=
f_0(y)+\sum_{i=1}^{s-1}f_i(y)x^i-p_0^{0}(y) q_0(y)-q_0(y)\sum_{i=1}^{s-1} p_i^{0}(y)x^i\\&=\sum_{i=1}^{s-1}f_i(y)x^i-q_0(y)\sum_{i=1}^{s-1} p_i^{0}(y)x^i. \ \ \ \ \ \  \ \ \ \ \  \ \ \ \ \ \T{(by\ equation\ \ref{1})} 
\end{align*}
Since $f(x,y)$ and $\fp_0(x,y)$ are in $M$ and $M$ is an $\F[y]$-submodule of $R$, $h_1(x,y)$ is a polynomial in $M$. Also note that  $h_1(x,y)$ is in the form of $h_1(x,y)=\sum_{i=1}^{s-1} h_i^{1}(y)x^i$ for some $h_i^{1}(y) \in S$. Put
\begin{align*}
I_1=\{g_1(y) \in S:& \ \T{there \  exists} \ g(x,y)\in M \ \T{such \  that \ }  g(x,y)=\sum_{i=1}^{s-1}g_i(y)x^i\}.
\end{align*}
By the same method we applied for $I_0$, it can be proved that $I_1$ is an ideal of $S$. Thus, there exists a unique monic polynomial $p_1^1(y)$ in $S$ such that $I_1=<p_1^{1}(y)>$ and $p_1^{1}(y)$ is a divisor of $y^{\ell}-1$. Therefore, there exists a polynomial $p'_1(y)$ in $\F[y]$ such that $y^{\ell}-1=p'_1(y)p_1^{1}(y)$. Now, $h_1(x,y) \in M$ so according to the definition of $I_1$, $h_1^{1}(y) \in I_1=<p_1^{1}(y)>$. So 
\begin{align}\label{2}
h_1^{1}(y)=p_1^{1}(y) q_1(y)
\end{align}
 for some $ q_1(y) \in S$. 
Also, $p_1^1(y) \in I_1$ so according to the definition of $I_1$, there exists $\fp_1(x,y) \in M$ such that $$\fp_1(x,y)=\sum_{i=1}^{s-1} p_i^1(y)x^i.$$
Set
\begin{align*}
h_2(x,y):&=h_1(x,y)-\fp_1(x,y)q_1(y)=\sum_{i=1}^{s-1} h_i^{1}(y)x^i-q_1(y)\sum_{i=1}^{s-1} p_i^1(y)x^i\\&=h_1^{1}(y)x+\sum_{i=2}^{s-1} h_i^{1}(y)x^i-p_1^{1}(y) q_1(y)x-q_1(y)\sum_{i=2}^{s-1} p_i^1(y)x^i\\&=\sum_{i=2}^{s-1} h_i^{1}(y)x^i-q_1(y)\sum_{i=2}^{s-1} p_i^1(y)x^i.\ \ \ \ \ \  \ \ \ \ \  \ \ \ \ \ \T{(by\ equation\ \ref{2})} 
\end{align*}
Since $h_1(x,y)$ and $\fp_1(x,y)$ are in $M$ and $M$ is an $\F[y]$-submodule of $R$, $h_2(x,y)$ is a polynomial in $M$ in the form of $h_2(x,y)=\sum_{i=2}^{s-1}h_i^{2}(y)x^i$ for some $h_i^{2}(y) \in S$.
Put
\begin{align*}
I_2=\{g_2(y) \in S: & \ \T{there \  exists \ }  g(x,y)\in M \ \T{such \  that \ } g(x,y)= \sum_{i=2}^{s-1} g_i(y)x^i\}.
\end{align*}
Again $I_2$ is an ideal of $S$, and so there exists a unique monic polynomial $p_2^{2}(y)$ in $S$ such that $I_2=<p_2^{2}(y)>$. Also  $p_2^{2}(y)$ is a divisor of $y^{\ell}-1$, and so there exists a polynomial $p'_2(y)$ in $\F[y]$ such that $y^{\ell}-1=p'_2(y)p_2^{2}(y)$. Now, $h_2(x,y) \in M$ so according to the definition of $I_2$, $h_2^{2}(y) \in I_2=<p_2^{2}(y)>$. So 
\begin{align}\label{3}
h_2^{2}(y)=p_2^{2}(y)q_2(y)
\end{align}
for some $q_2(y)\in S$. 
Also, $p_2^2(y) \in I_2$ so according to the definition of $I_2$, there exists $\fp_2(x,y) \in M$ such that $$\fp_2(x,y)=\sum_{i=2}^{s-1}p_i^2(y)x^i.$$
Set 
\begin{align*}
h_3(x,y):&=h_2(x,y)-\fp_2(x,y)q_2(y)=\sum_{i=2}^{s-1} h_i^{2}(y)x^i-q_2(y)\sum_{i=2}^{s-1} p_i^2(y)x^i\\&=h_2^{2}(y)x^2+\sum_{i=3}^{s-1} h_i^{2}(y)x^i-p_2^{2}(y) q_2(y)x^2-q_2(y)\sum_{i=3}^{s-1} p_i^2(y)x^i\\&=\sum_{i=3}^{s-1} h_i^{2}(y)x^i-q_2(y)\sum_{i=3}^{s-1} p_i^2(y)x^i.\ \ \ \ \ \  \ \ \ \ \  \ \ \ \ \ \T{(by\ equation \ \ref{3})} 
\end{align*}
 Thus, $h_3(x,y)$ is a polynomial in $M$ in the form of $h_3(x,y)=\sum_{i=3}^{s-1}h_i^3(y)x^i$ for some $h_i^{3}(y) \in S$.
Put
\begin{align*}
I_3=\{&g_3(y) \in S: \T{there \  exists \ } g(x,y)\in M \ \T{such \  that \ } g(x,y)=\sum_{i=3}^{s-1}g_i(y)x^i\}.
\end{align*}
Clearly $I_3$ is an ideal of $S$. So there exists a unique monic polynomial $p_3^{3}(y)$ in $S$ such that $I_3=<p_3^{3}(y)>$ and $p_3^{3}(y)$ is a divisor of $y^{\ell}-1$. Thus, there exists a polynomial $p'_3(y)$ in $\F[y]$ such that $y^{\ell}-1=p'_3(y)p_3^{3}(y)$.
 Now, $h_3(x,y) \in M$ so according to the definition of $I_3$,  $h_3^{3}(y) \in I_3=<p_3^{3}(y)>$. Therefore,
\begin{align}\label{4}
h_3^{3}(y)=p_3^{3}(y)q_3(y)
\end{align}
for some $q_3(y)\in S$. 
Also,  $p_3^3(y) \in I_3$ so according to the definition of $I_3$, there exists $\fp_3(x,y) \in M$ such that $$\fp_3(x,y)=\sum_{i=3}^{s-1}p_i^3(y)x^i.$$ Set
\begin{align*}
h_4(x,y):&=h_3(x,y)-\fp_3(x,y)q_3(y)=\sum_{i=3}^{s-1} h_i^{3}(y)x^i-q_3(y)\sum_{i=3}^{s-1} p_i^3(y)x^i\\&=h_3^{3}(y)x^3+\sum_{i=4}^{s-1} h_i^{3}(y)x^i-p_3^{3}(y) q_3(y)x^3-q_3(y)\sum_{i=4}^{s-1} p_i^3(y)x^i\\&=\sum_{i=4}^{s-1} h_i^{3}(y)x^i-q_3(y)\sum_{i=4}^{s-1} p_i^3(y)x^i.\ \ \ \ \ \  \ \ \ \ \  \ \ \ \ \ \T{(by\ equation\ \ref{4})} 
\end{align*}
So $h_4(x,y)$ is a polynomial in $M$ in the form of $h_4(x,y)=\sum_{i=4}^{s-1}h_i^4(y)x^i$ for some $h_i^{4}(y) \in S$.
In the next step, we put
\begin{align*}
I_4=\{& g_4(y) \in S: \T{there \  exists \ }  g(x,y)\in M \  \T{such \  that \ } g(x,y)=\sum_{i=4}^{s-1} g_i(y)x^i\}.
\end{align*}
The same procedure is applied to obtain 
$$h_5(x,y),\dots,h_{s-2}(x,y),\fp_4(x,y),\dots,\fp_{s-2}(x,y)$$
in $M$ and
$q_4(y),\dots,q_{s-2}(y)$ in $S$ and 
  construct ideals $I_5,\dots,I_{s-2}$. Finally, we set $$h_{s-1}(x,y):=h_{s-2}(x,y)-\fp_{s-2}(x,y)q_{s-2}(y).$$ Thus, $h_{s-1}(x,y)$ is a polynomial in $M$ in the form of $h_{s-1}(x,y)=h_{s-1}^{s-1}(y)x^{s-1}$. Set
\begin{align*}
I_{s-1}=\{ g_{s-1}(y) \in S: & \ \T{there \  exists \ }  g(x,y)\in M \ \T{such \  that \ }  g(x,y)=g_{s-1}(y)x^{s-1}\}.
\end{align*}
Clearly $I_{s-1}$ is an ideal of $S$. Thus, there exists a unique monic polynomial $p_{s-1}^{s-1}(y)$ in $S$ such that $I_{s-1}=<p_{s-1}^{s-1}(y)>$ and $p_{s-1}^{s-1}(y)$ is a divisor of $y^{\ell}-1$ (there exists $p'_{s-1}(y)$ in $\F[y]$ such that $y^{\ell}-1=p'_{s-1}(y)p_{s-1}^{s-1}(y)$). Now, $h_{s-1}(x,y) \in M$ so according to the definition of $I_{s-1}$,  $h_{s-1}^{s-1}(y) \in I_{s-1}=<p_{s-1}^{s-1}(y)>$. So
\begin{align}\label{kolo}
h_{s-1}^{s-1}(y)=q_{s-1}(y)p_{s-1}^{s-1}(y)
\end{align}
for some $q_{s-1}(y) \in S$. 
Also, $p_{s-1}^{s-1}(y) \in I_{s-1}$ so by definition of $I_{s-1}$, there exists $\fp_{s-1}(x,y) \in M$ such that $\fp_{s-1}(x,y)=p_{s-1}^{s-1}(y)x^{s-1}.$ According to the equation \ref{kolo}, 
$$h_{s-1}(x,y)=h_{s-1}^{s-1}(y)x^{s-1}=q_{s-1}(y)p_{s-1}^{s-1}(y)x^{s-1}=q_{s-1}(y)\fp_{s-1}(x,y).$$
Thus, for an arbitrary element $f(x,y) \in M$ we show that
\begin{align*}
& h_1(x,y):=f(x,y)-\fp_0(x,y)q_0(y)\\&
h_2(x,y):=h_1(x,y)-\fp_1(x,y)q_1(y)\\&
h_3(x,y):=h_2(x,y)-\fp_2(x,y)q_2(y)\\&
h_4(x,y):=h_3(x,y)-\fp_3(x,y)q_3(y)\\&
\dots\\&
h_{s-1}(x,y):=h_{s-2}(x,y)-\fp_{s-2}(x,y)q_{s-2}(y)\\&
h_{s-1}(x,y)=q_{s-1}(y)\fp_{s-1}(x,y).
\end{align*}
So 
\begin{align*}
f(x,y)&=\fp_0(x,y)q_0(y)+\fp_1(x,y)q_1(y)+\fp_2(x,y)q_2(y)+\fp_3(x,y)q_3(y)\\&+
\dots+\fp_{s-2}(x,y)q_{s-2}(y)+\fp_{s-1}(x,y)q_{s-1}(y).
\end{align*}
Since $\fp_{i}(x,y)\in M$ for $i=0,\dots,s-1$ and $f(x,y)$ is an arbitrary element of $M$ and $M$ is an $\F[y]$-submodule of $R$, we conclude that 
$$M=<\fp_0(x,y),\dots,\fp_{s-1}(x,y)>.$$
So $\{\fp_0(x,y), \fp_1(x,y),\dots,\fp_{s-1}(x,y)\}$ is a set of  generating polynomials for $M$. 

In the next theorem, we introduce the generator matrix for QC codes.
\begin{theorem}
Suppose that $M$ is an $\F[y]$-submodule of $\F[x,y]/<x^s-1,y^{\ell}-1>$ and is generated by $\{\fp_0(x,y),\dots,\fp_{s-1}(x,y)\}$, which obtained from the above method. Then the set
\begin{align*}
\{&\fp_0(x,y),y\fp_0(x,y),\dots,y^{\ell-a_0-1}\fp_0(x,y), \\ & \fp_1(x,y),y\fp_1(x,y),\dots,y^{\ell-a_1-1}\fp_1(x,y), \\ & \ \ \ \ \ \ \ \ \ \ \ \ \ \ \ \ \ \ \ \  \ \ \ \ \ \vdots \\ & \fp_{s-1}(x,y),y\fp_{s-1}(x,y),\dots,y^{\ell-a_{s-1}-1}\fp_{s-1}(x,y)\}
\end{align*}
forms an $\F$-basis for $M$, where  $a_i=\T{deg}(p_{i}^i(y))$.
\end{theorem}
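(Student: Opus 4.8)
The plan is to show that the proposed set, which I denote $B=\{\,y^t\fp_i(x,y):0\le i\le s-1,\ 0\le t\le \ell-a_i-1\,\}$, both spans $M$ over $\F$ and is $\F$-linearly independent. Both halves rest on one elementary fact about $S$: if $p(y)$ is a monic divisor of $y^{\ell}-1$ of degree $a$, then the ideal $\langle p(y)\rangle\subseteq S$ has the $\F$-basis $\{p(y),yp(y),\dots,y^{\ell-a-1}p(y)\}$, so every element of $\langle p(y)\rangle$ is uniquely $u(y)p(y)$ with $\deg u<\ell-a$. Recall from the construction that each $\fp_i(x,y)=\sum_{k=i}^{s-1}p_k^i(y)x^k$ has its lowest $x$-term at $x^i$ with coefficient $p_i^i(y)$, where $\langle p_i^i(y)\rangle=I_i$ and $a_i=\deg p_i^i$. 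For a nonzero $g=\sum_k g_k(y)x^k\in R$ I call the least $k$ with $g_k\neq 0$ its $x$-pivot; thus $\fp_i$ has $x$-pivot $i$.

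For spanning I would argue by a triangular reduction on the $x$-pivot. Given $f\in M$, if $f\neq 0$ let $i$ be its $x$-pivot and $f_i(y)$ its leading coefficient. By the very definition of $I_i$, $f_i\in I_i=\langle p_i^i\rangle$, so by the fact above I can write $f_i=u(y)p_i^i(y)$ with $\deg u<\ell-a_i$. Since the coefficient of $x^i$ in $u(y)\fp_i(x,y)$ is exactly $u(y)p_i^i(y)=f_i(y)$ and $\fp_i$ has no terms below $x^i$, the element $f-u(y)\fp_i(x,y)$ lies in $M$ and has strictly larger $x$-pivot. Because $u(y)\fp_i\in\text{span}_\F\{y^t\fp_i:0\le t\le\ell-a_i-1\}\subseteq\text{span}_\F(B)$ and the $x$-pivot takes only the values $0,\dots,s-1$, iterating this step terminates at $0$ after finitely many reductions, exhibiting $f$ as an $\F$-combination of elements of $B$.

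For linear independence I would suppose $\sum_{i=0}^{s-1}c_i(y)\fp_i(x,y)=0$ in $R$, where each $c_i(y)=\sum_{t=0}^{\ell-a_i-1}\lambda_{i,t}y^t$ collects the scalars attached to $\fp_i$, so $\deg c_i<\ell-a_i$. Comparing coefficients of $x^k$ in $S$ (the representation $\sum_k(\cdot)x^k$ with $k<s$ is unique in $R$), the coefficient of $x^i$ is $\sum_{j=0}^{i}c_j(y)p_i^j(y)$, which is triangular in $i$. For $i=0$ this reads $c_0 p_0^0\equiv 0\pmod{y^{\ell}-1}$; but $\deg(c_0 p_0^0)<(\ell-a_0)+a_0=\ell$, so divisibility by $y^{\ell}-1$ forces $c_0 p_0^0=0$ in $\F[y]$ and hence $c_0=0$. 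Feeding $c_0=0$ into the coefficient of $x^1$ gives $c_1 p_1^1\equiv 0$, and the same degree argument yields $c_1=0$; inductively $c_i=0$ for all $i$, so every $\lambda_{i,t}=0$.

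I expect the main obstacle to be the spanning step, and specifically the discipline of reducing only the leading coefficient at each stage rather than trying to reduce a fixed $\F[y]$-expansion $f=\sum_i q_i(y)\fp_i$ term by term. The latter is tempting but fails, because the cyclic module $S\fp_i$ can have $\F$-dimension larger than $\ell-a_i$ (its annihilator may be strictly smaller than $\langle(y^{\ell}-1)/p_i^i\rangle$ once the higher terms $p_k^i$, $k>i$, are taken into account), so $q_i(y)\fp_i$ need not lie in $\text{span}_\F\{y^t\fp_i:t\le\ell-a_i-1\}$. The pivot-by-pivot reduction sidesteps this entirely, since it only ever invokes membership of the leading coefficient in $I_i$. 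Once both halves are in place, comparing $|B|=\sum_{i=0}^{s-1}(\ell-a_i)$ with the two properties also records $\dim_\F M=\sum_{i=0}^{s-1}(\ell-a_i)$ as a byproduct.
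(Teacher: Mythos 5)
Your proof is correct, and it is actually more complete than the one in the paper. Your linear independence argument is essentially identical to the paper's: look at the coefficient of $x^i$, use the triangular structure (only $\fp_0,\dots,\fp_i$ contribute, and after the inductive hypothesis only $c_i p_i^i$ survives), and conclude $c_i=0$ by comparing degrees against $\ell$. Where you go beyond the paper is the spanning half: the paper's proof of the theorem addresses only independence, tacitly relying on Section 2, which however only establishes $M=<\fp_0(x,y),\dots,\fp_{s-1}(x,y)>$ as an $\F[y]$-module, i.e.\ $f=\sum_i q_i(y)\fp_i(x,y)$ with no degree control on the $q_i$. To get an $\F$-basis one must know that each $q_i$ can be taken with $\deg q_i<\ell-a_i$, and your pivot-by-pivot reduction supplies exactly this: since the $x$-pivot coefficient lies in $I_i=<p_i^i(y)>$ and every element of this ideal with representative of degree less than $\ell$ is an exact polynomial multiple $u(y)p_i^i(y)$ with $\deg u<\ell-a_i$, the subtracted term $u(y)\fp_i$ stays inside $\T{span}_\F\{y^t\fp_i: t\le \ell-a_i-1\}$. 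Your cautionary remark is also well taken: one cannot start from an arbitrary $\F[y]$-expansion and reduce the $q_i$ modulo the annihilator of $\fp_i$, since that annihilator may be strictly smaller than $<(y^\ell-1)/p_i^i(y)>$ because of the higher terms $p_k^i$, $k>i$. The dimension count $\dim_\F M=\sum_{i=0}^{s-1}(\ell-a_i)$ you obtain as a byproduct is a genuine bonus not recorded in the paper.
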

\begin{proof}
Assume that $l_0(y),\dots, l_{s-1}(y)$ are polynomials in $\F[y]$ such that $\T{deg}(l_i(y))< \ell-a_i$ and $l_0(y) \fp_0(x,y)+\dots+l_{s-1}(y)\fp_{s-1}(x,y)=0$. These imply the following equation in $S$ $$l_0(y)  p_0^0(y)=0.$$ Therefore, 
$l_0(y) p_0^0(y)=s(y) (y^{\ell}-1)$ for some $s(y) \in \F[y]$. Now,  the degree of $y$ in the right side of this equation is at least $\ell$ but since $\T{deg}(p_0^0(y))=a_0$ and $\T{deg}(l_0(y))< \ell-a_0$, the degree of $y$ in the left side of this equation is at most $\ell-1$. So we get
$l_0(y)=0$. Similar arguments yield $l_i(y)=0$ for $i=1,\dots,s-1$.
\end{proof}

\section{Conclusion}
In this paper, we investigate the structure of quasi cyclic codes of length $s\ell$ with index $s$. This leads to studying the structure of $\F[y]$-submodules of the ring $\F[x,y]/<x^s-1,y^{\ell}-1>$. We apply a novel method to obtain generating sets of polynomials  for $\F[y]$-submodules of  $\F[x,y]/<x^s-1,y^{\ell}-1>$. These polynomials will be applied to obtain generator matrix for corresponding QC codes.
It is worth to note that similar method can be applied to obtain generator matrix for the other families of codes, such as generalized quasi-cyclic codes and multi-twisted codes.

 \end{document}